% This is the LaTeX-2e input file of the paper
% by R. Okhrati, U. Schmock

\documentclass[12pt]{article}
\usepackage{amsmath,amsthm,amssymb,bbm,txfonts}
%The following package is for double bracket
\usepackage{stmaryrd}
%The following package is for average integral
\usepackage{esint}
\usepackage[round]{natbib}

\newtheorem{theorem}{Theorem}[section]
\newtheorem{Def}{Definition}[section]
\newtheorem{prop}{Proposition}[section]
\newtheorem{lemma}{Lemma}[section]

\newtheorem{remark}{Remark}[section]
\newtheorem{example}{Example}[section]
\newtheorem{assump}{Assumption}[section]

\newcommand{\E}{\mathbb{E}}
\newcommand{\pr}{\mathbb{P}}
\newcommand{\R}{\mathbb{R}}

\newcommand{\norm}[1]{\left|\left|#1\right|\right|}
\newcommand{\parenth}[1]{\left(#1\right)}
\newcommand{\expect}[1]{\mathbb E\left[#1\right]}
\newcommand{\integral}[3]{\int_{#1}^{#2}{#3}\;}
\newcommand{\set}[2]{\{{#1:\;}{#2}\}}

\title{It\^o\rq{}s formula for finite variation L\'evy processes: The case of non-smooth functions}

\author{Ramin Okhrati\thanks{University of Southampton, Southampton, UK, Email: r.okhrati@soton.ac.uk }
~, Uwe Schmock\thanks{Vienna University of Technology, Vienna, Austria, Email: schmock@fam.tuwien.ac.at}
}
%\thanks{The author gratefully acknowledge a partial financial support from the Vienna Science and Technology Fund (WWTF) under the grant MA09-003.}

\date{}
%\date{August 2006, revised November 2007}
%This is to produce roman numbers by \rmnum{num} and \Rmnum{num}
\makeatletter

\newcommand{\Rmnum}[1]{\expandafter\@slowromancap\romannumeral #1@}
%\makeatother

\usepackage{fullpage}
\setlength{\topmargin}{.1in}
\setlength{\textheight}{9in}
\setlength{\oddsidemargin}{.125in}
\setlength{\textwidth}{6.25in}

\begin{document}
\maketitle
\begin{abstract}
Extending It\^o's formula to non-smooth functions is important both in theory and applications. One of the fairly general extensions of the formula, known as Meyer-It\^o, applies to one dimensional semimartingales and convex functions. There are also satisfactory generalizations of It\^o's formula for diffusion
processes where the Meyer-It\^o assumptions are weakened even further. 
We study a version of It\^o\rq{}s formula for multi-dimensional finite variation L\'evy processes assuming that the underlying function is continuous and admits weak derivatives. We also discuss some applications of this extension, particularly in finance.

\end{abstract}
\textbf{Keywords: It\^o\rq{}s formula, Finite variation L\'evy process, Weak derivative, PIDE}
\thispagestyle{empty}
\clearpage

\pagenumbering{arabic}

\setcounter{equation}{0}
\setlength{\baselineskip}{1.3\baselineskip}

\section{Introduction}
%\cite{bardina2}

In order to motivate our study, we consider the following Partial Integro-Differential Equation (PIDE):
\begin{align}\label{eq:pide}
    \frac{\partial P}{\partial t}(t,x)&+rx\frac{\partial P}{\partial x}(t,x)+\frac{\sigma^2x^2}{2}\frac{\partial^2 P}{\partial x^2}(t,x)-rP(t,x)\nonumber\\
    &+\int v(dy)\left(P(t,xe^y)-P(t,x)-x(e^y-1)\frac{\partial P}{\partial x}(t,x)\right)=0,\nonumber\\
   &  \hspace{-1.25cm}P(T,x)=(x-K)^+ \text{, for all $x\in(0,D)$,}\nonumber\\
  &\hspace{-1.25cm} P(t,x)=0 \text{, for all $x\geq D$, and all $t\in[0,T],$}
\end{align}
where $D>K>0$, $r>0$, $T>0$, are constants, and $v$ is the L\'evy measure of a L\'evy process $X$ with characteristic triplet $(\sigma^2,v,\gamma)$. Furthermore, it is assumed that $\left(e^{X_t}\right)_{0\leq t\leq T}$ is a martingale with respect to the natural filtration generated by $X$ and a risk-neutral probability measure. 

Finding the solution of this PIDE (or similar ones) is of particular interest in different applied fields. For instance, under some circumstances the solution of PIDE \eqref{eq:pide} can be identified as the price of a financial security. As it follows, It\^o's formula is a key element in this procedure.

More precisely, assume that the risk-neutral evolution of an asset is modeled by $S_t=S_0e^{rt+X_t}$, where $r$ and $X$ are the same as above such that $\left(e^{-rt}S_t\right)_{0\leq t\leq T}$ is a martingale under the risk-neutral probability measure.  Suppose that we are interested in pricing a barrier option with maturity $T$, strike price $K$, barrier $D>K$, and the payoff $\max(S_T-K,0)1_{\{\max_{0\leq t\leq T}S_t<D\}}$.   If $\sigma>0$, then using It\^o\rq{}s formula one can show that there is a $C^{1,2}$ solution of  PIDE \eqref{eq:pide} which is in fact the price of this barrier option given by 
\begin{equation}\label{eq:purposed}
     P(t,x)=e^{-r(T-t)}\mathbb E[H(S_{T\wedge\tau_D})|S_t=x],
\end{equation}
where $\E$ is the expectation under the risk-neutral measure, $H(x):=(x-K)^+1_{\{x<D\}}$, and $\tau_D:=\inf\{s\geq t; X_s\geq D\}$, see Proposition 12.2 of \cite{cont}.

Equation \eqref{eq:purposed} is in fact the Feynman-Kac representation of the solution of PIDE \eqref{eq:pide} which can be numerically calculated through simulation techniques.  Note that the condition $\sigma>0$ is crucial for this argument to work which guarantees that the purposed solution \eqref{eq:purposed} is smooth and hence It\^o's formula is applicable.  However, in the case of pure jump L\'evy processes, i.e. when $\sigma=0$, the smoothness is not obvious and it can fail. The situation is more complicated for American options where the smoothness of the purposed solution is not known even in the presence of a non-zero volatility, see Chapter 12 of \cite{cont} for more detail. For example, Theorem 7.2 of \cite{boy} shows that the smoothness of the purposed solution in the case of American option fails for tempered stable L\'evy processes with finite variation.

One purpose of this work is to fix this kind of problems for models using finite variation L\'evy processes. For this class of processes, under some conditions, we obtain an It\^o formula that works well with non-smooth continuous functions. In particular, this can provide a solution to PIDE \eqref{eq:pide} when $\sigma=0$ and $X$ is a finite variation L\'evy process. This problem is investigated at the end of this paper. We continue with some literature review.

A version of It\^o's formula is obtained in \cite{aebi}  where the underlying process is a continuous semimartingale with a special structure. In this paper, the first and second order derivatives of the function are defined in the sense of distributions and they satisfy some local integrability conditions. \cite{follmer2} discuss an extension of the formula to a one-dimensional standard Brownian motion and an absolutely continuous function with a locally square integrable derivative. This result was further extended by \cite{follmer1} to a multi-dimensional Brownian motion. 

Following the idea of \cite{follmer2}, an extension of It\^o's formula is proved in \cite{bardina1} for a one-dimensional diffusion process such that its law has a density satisfying certain integrability conditions.  In their work, it is assumed that the underlying function $f=f(t,x)$ is absolutely continuous in $x$ with a locally square integrable derivative satisfying a mild form of continuity in time $t$.

In all the above works, the sample paths of the underlying processes are continuous. Concerning discontinuous processes,  Theorem 70, Chapter \Rmnum{4} of \cite{protter} (known as Meyer-It\^o's formula) provides a fairly general extension of It\^o's formula to semimartingales and one dimensional  convex functions. 

Comparing to Theorem 70, Chapter \Rmnum{4} of \cite{protter}, our extension applies to finite variation L\'evy processes and continuous functions that admit weak derivatives. Therefore this generalizes Meyer-It\^o's formula  for finite variation L\'evy processes. In addition, it is assumed that the function is multi-dimensional and time-dependent.  Beside the motivation provided at the beginning and theoretical interests to extend It\^o's formula for these processes, it is also argued in \cite{geman} that the evolution of asset prices are better modeled by finite variation processes with infinite activity\footnote{A L\'evy process $X$ in $\R^d$ is of infinite activity, if there are infinite number of jumps on any finite time interval, i.e. $v(\R^d)=\infty$, where $v$ is the L\'evy measure of $X$. }. 

The structure of the paper is as follows. The theoretical backgrounds, in particular some fundamental results in real and functional analysis are reviewed in Section \ref{sec:pd}. Section \ref{sec:dakt} concentrates on hypotheses and key tools. The main result is proved in Section \ref{sec:mr}. Finally, the paper ends with some examples and conclusions.

%this is just fynman-kack representation

%**********************************************************************************************************************************************
%**********************************************************************************************************************************************
%**********************************************************************************************************************************************
\section{Preliminaries and Definitions}\label{sec:pd}
In this section, we recall a few results from real and functional analysis (basically  Distribution theory) that will be used later. We begin with some definitions. In what follows, $\R$ is the set of real numbers; $U\subset\R^d$ is a nonempty open set, $d\geq 1$; $|.|$ and $\norm{.}_d$ are respectively the one-dimensional and d-dimensional Euclidean norms; and $m$ is the Lebesgue measure. For simplicity, regardless of the dimension of the space, the Lebesgue measure is always denoted by $m$.
\begin{Def}
   A point $x\in U\subset\R^d$ is a Lebesgue point of a function $f:U\longmapsto\R$ if $$\lim_{r\rightarrow0^+}\frac{1}{m(B_r(x))}\integral{B_r(x)}{}{|f(y)-f(x)|}dy=0,$$ where $B_r(x)=\set{y\in\R^d}{|y-x|<r}$ and the limit is taken for those $r$ small enough to guarantee that $B_r(x)$ is a subset of $U$.
\end{Def}
\begin{Def}
   The set of all Lebesgue points of $f:U\longmapsto\R$ is denoted by $L_f$ and it is called the Lebesgue set.
\end{Def}
\begin{Def}
   A family $\{E_r\}_{r>O}$ of Borel subsets of $U$ is said to shrink nicely to $x\in U$ if the following two conditions hold
   \begin{itemize}
      \item $E_r\subset B_r(x)\subset U$ for each $r$,
      \item there is a constant $\alpha > 0$, independent of $r$, such that $m(E_r)> \alpha m(B_r(x))$.
   \end{itemize}
\end{Def}
\begin{theorem}\label{theorem:ldt}
   \textbf{The Lebesgue Differentiation Theorem.} Suppose that $f\in L_{loc}^1(U)$ and $supp(f)\subset U$. Then we have
   \begin{itemize}
      \item $m(U-L_f)=0$,
      \item for every $x$ in the Lebesgue set of $f$, in particular for almost every $x$ in $U$, we have
      $$\lim_{r\rightarrow0^+}\frac{1}{m(E_r)}\integral{E_r}{}{|f(y)-f(x)|}dy=0,$$ where $\{E_r\}_{r>0}$ is a family of Borel subsets of $U\subset\R^d$ that
      shrinks nicely to $x$.
   \end{itemize}
\end{theorem}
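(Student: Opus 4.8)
The plan is to prove the two bullet points in sequence, with nearly all of the effort going into the first (that almost every point of $U$ lies in $L_f$), after which the second (the nicely-shrinking average) follows by a one-line domination. The engine of the argument is the Hardy--Littlewood maximal function $Mf(x):=\sup_{r>0}\frac{1}{m(B_r(x))}\integral{B_r(x)}{}{|f(y)|}dy$ together with its weak-type $(1,1)$ bound $m(\{Mf>\lambda\})\le\frac{C_d}{\lambda}\norm{f}_1$ for a dimensional constant $C_d$ (one may take $C_d=3^d$). I would establish this inequality by a Vitali-type covering lemma: on a compact set where $Mf>\lambda$, every point sits in a ball whose $f$-average exceeds $\lambda$; from a finite subcover I extract a pairwise disjoint subfamily whose threefold dilates still cover the set, and summing the lower bounds $m(B)\le\frac{1}{\lambda}\integral{B}{}{|f|}dy$ over the disjoint family gives the claim. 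A preliminary localization lets me assume $f\in L^1(\R^d)$: since being a Lebesgue point is a local property and $\mathrm{supp}(f)\subset U$, I may extend $f$ by zero across $\partial U$ and intersect its support with compact subsets of $U$, which does not disturb the interior points at issue because $U$ is open and hence $B_r(x)\subset U$ for small $r$.

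With the maximal inequality available, the core step is to control the oscillation functional $\Omega f(x):=\limsup_{r\to0^+}\frac{1}{m(B_r(x))}\integral{B_r(x)}{}{|f(y)-f(x)|}dy$, whose vanishing at $x$ is exactly the statement that $x\in L_f$. For a compactly supported continuous $g$ one has $\Omega g\equiv0$ by ordinary continuity, so the strategy is density: given $\epsilon>0$, I pick such a $g$ with $\norm{f-g}_1<\epsilon$, split $f=g+(f-g)$ inside the average, and pass to the limit to obtain the pointwise bound $\Omega f(x)\le M(f-g)(x)+|f(x)-g(x)|$. Chebyshev's inequality together with the weak-type estimate then yields, for every $\alpha>0$, the bound $m(\{\Omega f>\alpha\})\le\frac{2(C_d+1)}{\alpha}\epsilon$; letting $\epsilon\to0$ shows this set is null, and a countable union over $\alpha=1/n$ delivers $\Omega f=0$ almost everywhere, which is precisely $m(U-L_f)=0$.

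The second assertion I would deduce directly from the first. Fix $x\in L_f$ and a family $\{E_r\}_{r>0}$ shrinking nicely to $x$ with constant $\alpha$. Because the integrand is nonnegative, $E_r\subset B_r(x)$, and $m(E_r)>\alpha\,m(B_r(x))$, I can dominate
$$\frac{1}{m(E_r)}\integral{E_r}{}{|f(y)-f(x)|}dy\le\frac{1}{\alpha\,m(B_r(x))}\integral{B_r(x)}{}{|f(y)-f(x)|}dy,$$
and the right-hand side tends to $0$ as $r\to0^+$ precisely because $x$ is a Lebesgue point. This gives the limit for every $x\in L_f$, hence for almost every $x\in U$ by the first part.

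I expect the covering lemma behind the weak-type maximal inequality to be the only genuinely delicate ingredient; once the dimensional constant $C_d$ is in hand, the density reduction and the $\limsup$ estimate are routine, and the nicely-shrinking refinement is immediate. The one point requiring care is the localization near $\partial U$, where I must ensure that the balls $B_r(x)$ entering the averages remain inside $U$ for small $r$ — guaranteed here because the relevant $x$ are interior points and $\mathrm{supp}(f)\subset U$.
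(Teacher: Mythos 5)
Your argument is correct and is, in substance, the standard maximal-function proof: the paper itself offers no proof of this theorem but cites Theorem 3.21 of Folland, whose argument is exactly your route (weak-type $(1,1)$ bound via a Vitali covering lemma, density of $C_c$ in $L^1$, the oscillation functional $\Omega f$, and the $1/\alpha$ domination for nicely shrinking families). The only step deserving explicit care is the localization from $L^1_{loc}(U)$ to $L^1(\R^d)$, which you address adequately by truncating to compact subsets and restricting to small radii at interior points.
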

For a proof of this theorem in the case of $U=\R^d$, see Theorem 3.21 of \cite{folland}. The generalization to an open set $U\subset\R^d$ is straightforward. Note that following the Lebesgue Differentiation Theorem we have $\lim_{r\rightarrow0^+}\frac{1}{m(E_r)}\integral{E_r}{}{f(y)}dy=f(x),$
where $f$ and $E_r$ are the same as in the above theorem. Therefore, this can be thought of as a generalization of the fundamental theorem of calculus. In general, determining the Lebesgue points of a function is not an easy task. The next lemma gives a partial answer to this challenge; the proof is simple and hence omitted.

\begin{lemma}\label{lemma:fcont}
   If $f\in L_{loc}^1(U)$, $U\subset\R^d$, and $f$ is continuous at $x\in U$, then $x\in L_f.$
\end{lemma}
\begin{Def}\label{def:convolution}
     If $g:\R^d\longmapsto\R$ and $f:\R^d\longmapsto\R$, are measurable functions, then the convolution $g*f:\R^d\longmapsto\R$ is defined by $(g*f)(x):=\integral{\R^d}{}{g(x-y)f(y)}dy,$ provided that for every $x$ in $\R^d$, the integral is well defined.
\end{Def}
Some basic properties of convolution can be found in standard text books such as \cite{folland} or \cite{brezis}. The next lemma provides a simple and sufficient condition for the existence of convolution.
\begin{lemma}\label{lemma:well defined}
   %\begin{enumerate}
      %\item Assume that the existence of all integrals in question is guaranteed for real valued functions $f$, $g$, and $h$ on $\R^d$, then we have 
          %     \begin{enumerate}
              %     \item $f*g=g*f$ and $f*(g*h)=(f*g)*h$, 
                 % \item $supp(f*g)\subset\overline{ supp(f)+supp(g)}$, where $supp(f)$ is the closure of $\set{x\in\R^d}{f(x)\ne0}$.
             %\end{enumerate}
      Let $f\in L_{loc}^p(U)$, $p\geq1$, and $supp(f)\subset U$. Suppose that $g:\R^d\longmapsto\R$ is bounded and compactly supported. Then $g*f1_U$, $f1_U(x)=\left\{
         \begin{array}{ll}
           f(x), & \hbox{$x\in U;$} \\
           0, & \hbox{$x\notin U,$}
         \end{array}
       \right.$ is well defined on $\R^d$, i.e. the integral $\integral{U}{}{g(x-y)f(y)}dy$ is finite for all $x$ in $\R^d$.
   %\end{enumerate}
\end{lemma}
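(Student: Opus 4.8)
The plan is to exploit the compact support of $g$ to cut the region of integration down to a compact subset of $U$, on which the local $L^p$ integrability of $f$ together with H\"older's inequality delivers finiteness. Fix $x\in\R^d$ and set $K:=\mathrm{supp}(g)$, which is compact by hypothesis. Since $g(x-y)=0$ whenever $x-y\notin K$, i.e.\ whenever $y\notin x-K$, the integrand vanishes off the translated compact set $x-K$, and hence $\int_U g(x-y)f(y)\,dy=\int_{U\cap(x-K)}g(x-y)f(y)\,dy$.

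First I would localize further using the support of $f$. Because $f$ (and therefore $f1_U$) vanishes outside $\mathrm{supp}(f)$, only the set $C:=(x-K)\cap\mathrm{supp}(f)$ contributes to the integral. The crucial observation is that $C$ is a \emph{compact} subset of $U$: being a closed subset of the compact set $x-K$, it is itself compact, and it is contained in $\mathrm{supp}(f)\subset U$ by hypothesis. This is the step I expect to be the real content of the argument. Without the assumption $\mathrm{supp}(f)\subset U$, the compact set $x-K$ could meet $U$ arbitrarily close to $\partial U$, precisely where the $L^p_{loc}(U)$ control on $f$ is lost; the support condition is exactly what excludes this and guarantees that we land inside the reach of local integrability.

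Next I would upgrade local $L^p$ control to an $L^1$ bound on $C$. Since $f\in L^p_{loc}(U)$ and $C$ is a compact subset of $U$, we have $f\in L^p(C)$. As $C$ is bounded, its Lebesgue measure $m(C)$ is finite, so for $p\geq1$ H\"older's inequality (with conjugate exponent $q$ satisfying $1/q=1-1/p$, and trivially when $p=1$) yields $\int_C|f(y)|\,dy\leq\norm{f}_{L^p(C)}\,m(C)^{1-1/p}<\infty$.

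Finally, writing $M:=\sup_{\R^d}|g|<\infty$, which is finite because $g$ is bounded, I would combine the estimates to get $\bigl|\int_U g(x-y)f(y)\,dy\bigr|\leq M\int_C|f(y)|\,dy<\infty$. Since $x\in\R^d$ was arbitrary, the integral $\int_U g(x-y)f(y)\,dy$ is finite for every $x$, so $g*f1_U$ is well defined on all of $\R^d$, as claimed.
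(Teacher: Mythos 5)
Your proof is correct. The paper actually states this lemma without providing a proof, so there is nothing to compare against line by line; your argument is the natural one and fills the omission cleanly. You correctly identify the one point of substance: the hypothesis $\mathrm{supp}(f)\subset U$ is what guarantees that $C=(x-K)\cap\mathrm{supp}(f)$ is a compact subset of $U$ itself (not merely of $\overline{U}$), so that the $L^p_{loc}(U)$ hypothesis applies to it; the rest is the boundedness of $g$ and H\"older's inequality on a set of finite measure.
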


Let $\eta$ be any function in $C_c^{\infty}(\R^d)$ such that it satisfies the following conditions
$$\eta\geq0,\qquad \integral{\R^d}{}{\eta(x)}dx=1,\qquad supp(\eta)=\overline{B_1(0)}.$$
For any $\epsilon>0$, define $\eta^{\epsilon}(x)=\frac{1}{\epsilon^d}\eta(\frac{x}{\epsilon})$ then clearly we have
$$\eta^{\epsilon}\in C_c^{\infty}(\R^d),\qquad \integral{\R^d}{}{\eta^{\epsilon}(x)}dx=1,\qquad supp(\eta^{\epsilon})=\overline{B_{\epsilon}(0)}.$$
The next definition provides an example of such a function.
\begin{Def}
   Let $$\eta(x)=\left\{
         \begin{array}{ll}
           ce^{\frac{-1}{1-\norm{x}_d^2}}, & \hbox{$\norm{x}_d<1;$} \\
           0, & \hbox{$\norm{x}_d\geq1,$}
         \end{array}
       \right.$$
   and take $c$ such that $\integral{\R^d}{}{\eta(x)}dx=1$. Then $\eta^\epsilon$ is called the standard mollifier.
\end{Def}
Our discussion does not depend on a specific choice of $\eta^\epsilon$. However, if necessary, the reader can always consider the standard mollifier. Suppose that $f\in L_{loc}^p(U)$, $p\geq1$, and for every $\epsilon>0$, let $f^\epsilon:\R^d\longmapsto\R$ be defined by $$f^\epsilon(x):=(\eta^\epsilon*f1_U)(x)=\integral{U}{}{\eta^\epsilon(x-y)f(y)}dy.$$
%where $f1_U(x)=\left\{
   %      \begin{array}{ll}
%           f(x), & \hbox{$x\in U$;} \\
%           0, & \hbox{$x\notin U.$}
%         \end{array}
%       \right.$
For a fixed $x$ and $\epsilon$ small enough (that depends on $x$), $\overline{B_\epsilon(x)}\subset U$ and so $f^\epsilon(x)$ exists. However, if $supp(f)\subset U$ and since $f\in L_{loc}^p( U)$, $p\geq1$, by Lemma \ref{lemma:well defined}, $f^\epsilon$ is well defined on $\R^d$ for all $\epsilon>0$. The following theorem is a classical well-known result in the theory of distributions. Parts (1) and (2) can be found in Section 4.4 of \cite{brezis}, and part (3) is a  conclusion of Theorem \ref{theorem:ldt}.
\begin{theorem}\label{theorem:convsmooth}
   Assume that $f\in L_{loc}^p(U)$, $p\geq1$, $supp(f)\subset U$, and $\epsilon>0$. Then
   \begin{enumerate}
      \item $f^\epsilon\in C^{\infty}(\R^d)$ and $\partial^\alpha f^\epsilon=\partial^\alpha\eta^\epsilon*f1_U$,
      \item $f^\epsilon\longrightarrow f1_U$ in $L_{loc}^p(\R^d)$ as $\epsilon\rightarrow0^+$,
      \item $f^\epsilon\longrightarrow f1_U$ pointwise on $L_{f1_U}$ as $\epsilon\rightarrow0^+$, hence $f^\epsilon\longrightarrow f$ pointwise on $L_f$ as $\epsilon\rightarrow0^+$.
   \end{enumerate}
\end{theorem}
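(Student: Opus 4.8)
The plan is to prove the three assertions in turn. Throughout I write $g:=f1_U$ and note that, under the hypotheses $f\in L_{loc}^p(U)$ and $supp(f)\subset U$, the zero-extension $g$ belongs to $L^p(\R^d)$ and, by Lemma \ref{lemma:well defined}, $f^\epsilon=\eta^\epsilon*g$ is defined at every point of $\R^d$. Parts (1) and (2) are the classical smoothing and approximation properties of mollifiers (Section 4.4 of \cite{brezis}); part (3) I would deduce directly from the Lebesgue Differentiation Theorem (Theorem \ref{theorem:ldt}).

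\textbf{Part (1).} I would fix a point $x$ and differentiate under the integral sign in $f^\epsilon(x)=\integral{\R^d}{}{\eta^\epsilon(x-y)g(y)}dy$. Along a coordinate direction $e_i$ and for $|h|$ small, the difference quotient $h^{-1}\parenth{\eta^\epsilon(x+he_i-y)-\eta^\epsilon(x-y)}$ is, by the mean value theorem, bounded in absolute value by $\sup|\partial_i\eta^\epsilon|<\infty$ and supported (in $y$) inside a fixed compact ball, on which $g\in L^1$; dominated convergence then lets me pass the limit through the integral to obtain $\partial_i f^\epsilon=\partial_i\eta^\epsilon*g$. Iterating over a multi-index $\alpha$ gives $\partial^\alpha f^\epsilon=\partial^\alpha\eta^\epsilon*g$, and since each $\partial^\alpha\eta^\epsilon$ is again bounded and compactly supported the same reasoning shows every $\partial^\alpha f^\epsilon$ is continuous, whence $f^\epsilon\in C^\infty(\R^d)$.

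\textbf{Part (2).} Using $\integral{\R^d}{}{\eta^\epsilon(z)}dz=1$ and the substitution $z=x-y$, I would write
$$f^\epsilon(x)-g(x)=\integral{\R^d}{}{\eta^\epsilon(z)\parenth{g(x-z)-g(x)}}dz,$$
and then apply Minkowski's integral inequality together with $supp(\eta^\epsilon)=\overline{B_\epsilon(0)}$ to get
$$\norm{f^\epsilon-g}_{L^p(\R^d)}\leq\sup_{\norm{z}_d\leq\epsilon}\norm{g(\cdot-z)-g}_{L^p(\R^d)}.$$
The one genuinely analytic ingredient---and the step I expect to be the main obstacle---is the continuity of translation in $L^p$, namely that the right-hand side tends to $0$ as $\epsilon\to0^+$; I would establish this from the density of $C_c(\R^d)$ in $L^p(\R^d)$. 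Convergence on all of $\R^d$ gives convergence in $L_{loc}^p(\R^d)$ a fortiori.

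\textbf{Part (3).} At a Lebesgue point $x$ of $g$, I subtract $g(x)=g(x)\integral{\R^d}{}{\eta^\epsilon(x-y)}dy$ and use $\norm{\eta^\epsilon}_\infty\leq\norm{\eta}_\infty/\epsilon^d$ together with $supp(\eta^\epsilon)=\overline{B_\epsilon(0)}$ to estimate
$$|f^\epsilon(x)-g(x)|\leq\frac{\norm{\eta}_\infty}{\epsilon^d}\integral{B_\epsilon(x)}{}{|g(y)-g(x)|}dy=\omega_d\norm{\eta}_\infty\,\frac{1}{m(B_\epsilon(x))}\integral{B_\epsilon(x)}{}{|g(y)-g(x)|}dy,$$
where $\omega_d=m(B_1(0))$. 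The average on the right converges to $0$ by Theorem \ref{theorem:ldt}, applied to the nicely shrinking family $E_r=B_r(x)$, so $f^\epsilon\to g=f1_U$ pointwise on $L_{f1_U}$. For the final claim, if $x\in L_f$ then $x\in U$ and, $U$ being open, $g=f$ on a ball around $x$; hence $x\in L_{f1_U}$ with $g(x)=f(x)$, and therefore $f^\epsilon(x)\to f(x)$ for every $x\in L_f$.
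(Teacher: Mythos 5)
Your overall route is the one the paper itself points to: the paper offers no proof, citing Section 4.4 of \cite{brezis} for parts (1) and (2) and deducing part (3) from Theorem \ref{theorem:ldt}, and your arguments for parts (1) and (3) are correct implementations of exactly that (difference quotients dominated via the mean value theorem, and the Lebesgue-point estimate with the constant $\omega_d\norm{\eta}_\infty$).

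There is, however, one genuine flaw: the opening claim that the zero-extension $g:=f1_U$ belongs to $L^p(\R^d)$. The hypotheses give $supp(f)\subset U$ but not compactness of the support; for example $U=\R^d$ and $f\equiv1$ satisfy $f\in L_{loc}^p(U)$ and $supp(f)\subset U$, yet $g\equiv1\notin L^p(\R^d)$. What the hypotheses actually yield is only $g\in L_{loc}^p(\R^d)$, since for any compact $K$ the set $K\cap supp(f)$ is a compact subset of $U$ on which $f$ is $p$-integrable, and $g$ vanishes off $supp(f)$. This is harmless in parts (1) and (3), which use only local integrability, but it breaks part (2) as written: the bound $\norm{f^\epsilon-g}_{L^p(\R^d)}\leq\sup_{\norm{z}_d\leq\epsilon}\norm{g(\cdot-z)-g}_{L^p(\R^d)}$ may read $\infty\leq\infty$, and continuity of translation in $L^p$ is a statement about functions that are globally in $L^p$. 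The repair is routine and should be made explicit: since the conclusion is only $L_{loc}^p(\R^d)$ convergence, fix a compact set $K$, let $K'$ be the (compact) set of points at distance at most $1$ from $K$, and observe that for $\epsilon\leq1$ the values of $f^\epsilon=\eta^\epsilon*g$ on $K$ depend only on $g$ restricted to $K'$, so that $f^\epsilon=\eta^\epsilon*(g1_{K'})$ on $K$. Now $g1_{K'}\in L^p(\R^d)$, and your Minkowski-plus-translation-continuity argument applied to $g1_{K'}$ gives $\norm{f^\epsilon-g}_{L^p(K)}\leq\sup_{\norm{z}_d\leq\epsilon}\norm{(g1_{K'})(\cdot-z)-g1_{K'}}_{L^p(\R^d)}\rightarrow0$, which is the asserted convergence in $L_{loc}^p(\R^d)$.
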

Note that part (3) of Theorem \ref{theorem:convsmooth} implies that $f^\epsilon\longrightarrow f$, Lebesgue almost every where on $U$.
Let $\mathbb N_0$ be the set of non-negative integers and $\mathbb N_0^d=\set{(\alpha_1,\alpha_2,...,\alpha_d)}{\alpha_i\in\mathbb N_0, i=1,2,...,d}.$ An element of the set $\mathbb N_0^d$ is called a multi-index. In our extended version of It\^o's formula instead of classical strong differentiability, we apply weak differentiability which is defined below.
\begin{Def}\label{def:weak1}
    Suppose that $\alpha\in\mathbb N_0^d$ is a multi-index. We say that a function $f\in L_{loc}^1(U)$, $U\subset\R^d$, is weakly differentiable; and also its $\alpha$th-weak derivative denoted by $\partial^\alpha f\in L_{loc}^1(U)$, if $$\integral{U}{}{(\partial^\alpha f(x))\phi(x)}dx=(-1)^{|\alpha|}\integral{U}{}{f(x)(\partial^\alpha\phi(x))}dx,\; \text{for all}\;\phi\in C_c^\infty(U),$$
where $|\alpha|=\sum_{i=1}^d\alpha_i$, and the functions $\phi\in C_c^\infty(U)$ are called  test functions. 
\end{Def}
By applying Theorem \ref{theorem:convsmooth} and simple properties of weak derivatives, we can get the following theorem.
\begin{theorem}\label{theorem:convweak}
     Let $f\in L_{loc}^1(U)$ and $supp(f)\subset U$. We further assume that $f$ admits the weak derivative $\partial^\alpha f\in L_{loc}^1(U)$, then:
\begin{enumerate}
     \item $f^\epsilon\in C^\infty(\R^d)$, and $\partial^\alpha(f^\epsilon)=\eta^\epsilon*(\partial^\alpha f)$ on $U$,
     \item $\partial^\alpha(f^\epsilon)\longrightarrow\partial^\alpha f$ in $L_{loc}^1(U)$ as $\epsilon\rightarrow0^+$,
     \item $\partial^\alpha (f^\epsilon)\longrightarrow \partial^\alpha f$ pointwise on $L_{\partial^\alpha f}$ as $\epsilon\rightarrow 0^+$.
\end{enumerate}
\end{theorem}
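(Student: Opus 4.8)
The plan is to reduce the whole statement to Part (1), since Parts (2) and (3) then follow at once by applying Theorem \ref{theorem:convsmooth} to the function $g:=\partial^\alpha f$ in place of $f$. For Part (1) itself, the smoothness $f^\epsilon\in C^\infty(\R^d)$ together with the identity $\partial^\alpha f^\epsilon=(\partial^\alpha\eta^\epsilon)*f1_U$ is exactly Theorem \ref{theorem:convsmooth}(1); what remains is to transfer the derivative off the mollifier and onto $f$. Fix $x\in U$ and set $\phi(y):=\eta^\epsilon(x-y)$. Each partial derivative in $y$ contributes a sign by the chain rule, so $\partial^\alpha_y\phi(y)=(-1)^{|\alpha|}(\partial^\alpha\eta^\epsilon)(x-y)$, whence
$$\partial^\alpha f^\epsilon(x)=\int_U(\partial^\alpha\eta^\epsilon)(x-y)\,f(y)\,dy=(-1)^{|\alpha|}\int_U f(y)\,\partial^\alpha\phi(y)\,dy.$$
Provided $\phi\in C_c^\infty(U)$, I may invoke the definition of the weak derivative (Definition \ref{def:weak1}), which converts the last integral into $(-1)^{|\alpha|}\int_U(\partial^\alpha f)(y)\,\phi(y)\,dy$; the two factors $(-1)^{|\alpha|}$ cancel, leaving $\int_U(\partial^\alpha f)(y)\,\eta^\epsilon(x-y)\,dy=(\eta^\epsilon*(\partial^\alpha f)1_U)(x)$, which is the desired identity at $x$.

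The one delicate point, and the main obstacle, is the admissibility of $\phi$ as a test function: its support is $\overline{B_\epsilon(x)}$, so $\phi\in C_c^\infty(U)$ only when $\overline{B_\epsilon(x)}\subset U$. Here I would exploit the hypothesis $supp(f)\subset U$. Since $supp(f)$ is a compact subset of $U$, it lies at a positive distance $\delta$ from $U^c$ (with $\delta=\infty$ when $U=\R^d$). For $\epsilon<\delta/2$ every $x\in U$ falls into one of two cases: either $\overline{B_\epsilon(x)}$ misses $supp(f)$, in which case both sides of the identity vanish because $f$ and $\partial^\alpha f$ vanish on $\overline{B_\epsilon(x)}$; or $\overline{B_\epsilon(x)}$ meets $supp(f)$, in which case $\overline{B_\epsilon(x)}$ lies within distance $2\epsilon<\delta$ of $supp(f)$ and is therefore contained in $U$, so $\phi$ is a legitimate test function and the computation above applies. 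In either case the identity of Part (1) holds throughout $U$, which is all the convergence statements require.

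Finally, Parts (2) and (3) need no new analysis. Setting $g:=\partial^\alpha f$, the hypotheses give $g\in L_{loc}^1(U)$, while weak differentiation does not enlarge support, so $supp(g)\subset supp(f)\subset U$; thus $g$ satisfies the hypotheses of Theorem \ref{theorem:convsmooth} with $p=1$. That theorem yields $\eta^\epsilon*g1_U\longrightarrow g1_U$ in $L_{loc}^1(\R^d)$ (Part (2)) and $\eta^\epsilon*g1_U\longrightarrow g$ pointwise on $L_{g1_U}\supset L_g$ (Part (3)). Restricting the first convergence to $U$, where $g1_U=g$, and combining both with the identity $\partial^\alpha f^\epsilon=\eta^\epsilon*g1_U$ from Part (1), I obtain $\partial^\alpha f^\epsilon\longrightarrow\partial^\alpha f$ in $L_{loc}^1(U)$ and pointwise on $L_{\partial^\alpha f}$, respectively, completing the proof.
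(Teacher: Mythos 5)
Your overall route---first prove the commutation identity $\partial^\alpha(f^\epsilon)=\eta^\epsilon*(\partial^\alpha f)$ on $U$ and then apply Theorem \ref{theorem:convsmooth} to $g:=\partial^\alpha f$---is exactly what the paper intends (it gives no proof beyond the remark that the theorem follows from Theorem \ref{theorem:convsmooth} and ``simple properties of weak derivatives''), and your reduction of Parts (2) and (3), including the observation that weak differentiation does not enlarge the support, is correct. The sign bookkeeping in the integration by parts is also fine.

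There is, however, a genuine gap in your justification that $\phi(y)=\eta^\epsilon(x-y)$ is an admissible test function. You assert that $supp(f)$ is a \emph{compact} subset of $U$ and hence lies at positive distance $\delta$ from $U^c$. Neither claim follows from the hypotheses: $supp(f)\subset U$ only says that the closed support avoids $\partial U$, and a closed, non-compact subset of an open set can approach the complement arbitrarily closely (take $U=\{(s,x):x>0\}\subset\R^2$ and $supp(f)=\{(s,x):x\geq e^{-s^2}\}$, for which $\delta=0$). This is not a hypothetical worry here: in the paper the theorem is applied to the extension $\tilde f$ on the set $\R\times U$, whose support is unbounded in the time variable, so your dichotomy ``for $\epsilon<\delta/2$'' may be vacuous precisely in the intended application.

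The repair is standard but should be stated. For Parts (2) and (3) you only ever need the identity at a fixed $x\in U$ (or on a fixed compact $K\subset U$) for all sufficiently small $\epsilon$, and $\epsilon<\operatorname{dist}(K,U^c)$ already forces $\overline{B_\epsilon(x)}\subset U$; the distance you should be using is from $K$ to $U^c$, not from $supp(f)$ to $U^c$. For Part (1) as stated (every $x\in U$ and every $\epsilon>0$), replace $\phi$ by $\psi(y):=\chi(y)\,\eta^\epsilon(x-y)$, where $\chi\in C_c^\infty(U)$ equals $1$ on a neighbourhood of the compact set $supp(f)\cap\overline{B_\epsilon(x)}\subset U$ (compact because $supp(f)$ is closed and avoids $\partial U$). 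The Leibniz correction terms involving derivatives of $\chi$ are supported where either $f$ and $\partial^\alpha f$ vanish or where all derivatives of $y\mapsto\eta^\epsilon(x-y)$ vanish, so your computation goes through verbatim and in fact yields the identity on all of $\R^d$, not just on $U$.
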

\begin{remark}\label{remark:part1}
     Note that part(1) of Theorems \ref{theorem:convsmooth} and \ref{theorem:convweak} still holds if we replace $\eta^\epsilon$ by a test function.
\end{remark}
Though it is very simple, the next lemma is a key point in our discussion.
\begin{lemma}\label{lemma:key}
     Assume that $f\in L_{loc}^1(U)$ has the weak derivative $\partial^\alpha f\in L_{loc}^1(U).$ Suppose that $\phi\in C_c^\infty(\R^d)$ is a test function with support of $K$ such that $\phi(x)\geq0$, for all $x\in\R^d$ and $\integral{\R^d}{}{\phi(x)}dx=1$. Then for every $x\in\R^d$ we have $$|\partial^\alpha(f*\phi)(x)|\leq\sup_{z\in U\cap\Lambda(x)}|\partial^\alpha f(z)|,$$ where $\Lambda(x)=\set{y\in\R^d}{x-y\in K}$.
\end{lemma}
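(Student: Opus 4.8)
The plan is to reduce the claim to the identity
\[
\partial^\alpha(f*\phi)(x)=\int_{U\cap\Lambda(x)}(\partial^\alpha f)(y)\,\phi(x-y)\,dy,
\]
after which the bound is immediate. Everything rests on transferring the derivative off the smooth factor $\phi$ and onto $f$, where it becomes the weak derivative $\partial^\alpha f$. First I would record that $f*\phi$ (read as $(f1_U)*\phi$) is well defined and smooth: since $\phi$ is bounded and compactly supported, Lemma \ref{lemma:well defined} applies, and differentiation under the integral sign is legitimate because $\phi\in C_c^\infty(\R^d)$. Writing $(f*\phi)(x)=\int_U f(y)\,\phi(x-y)\,dy$ and differentiating gives $\partial^\alpha(f*\phi)(x)=\int_U f(y)\,(\partial^\alpha\phi)(x-y)\,dy$, so that the derivative initially falls on $\phi$.

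Next I would move the derivative onto $f$. Fixing $x$ and setting $\psi_x(y):=\phi(x-y)$, one has $\psi_x\in C_c^\infty(\R^d)$ with $\operatorname{supp}\psi_x=\Lambda(x)$ and $(\partial^\alpha\phi)(x-y)=(-1)^{|\alpha|}(\partial^\alpha\psi_x)(y)$. When $\Lambda(x)\subset U$, the shifted $\psi_x$ is an admissible test function on $U$, and Definition \ref{def:weak1} converts $\int_U f\,\partial^\alpha\psi_x$ into $(-1)^{|\alpha|}\int_U(\partial^\alpha f)\,\psi_x$, yielding $\partial^\alpha(f*\phi)(x)=\int_U(\partial^\alpha f)(y)\,\phi(x-y)\,dy$. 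This is exactly the content of part (1) of Theorem \ref{theorem:convweak} with $\eta^\epsilon$ replaced by the test function $\phi$, as permitted by Remark \ref{remark:part1}.

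Finally I would conclude by estimation. Because $\operatorname{supp}\phi=K$, the integrand vanishes unless $x-y\in K$, i.e.\ $y\in\Lambda(x)$, and because $\partial^\alpha f$ lives on $U$ the effective domain of integration is $U\cap\Lambda(x)$. Hence
\[
|\partial^\alpha(f*\phi)(x)|\le\sup_{z\in U\cap\Lambda(x)}|\partial^\alpha f(z)|\int_{\R^d}\phi(x-y)\,dy=\sup_{z\in U\cap\Lambda(x)}|\partial^\alpha f(z)|,
\]
using $\phi\ge0$, $\int_{\R^d}\phi=1$, and the translation invariance of Lebesgue measure.

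The one delicate point is the transfer of the derivative onto $f$ when the shifted support $\Lambda(x)$ is \emph{not} contained in $U$, so that $\psi_x$ fails to be a test function on $U$ and a naive integration by parts would generate a boundary term on $\partial U$. This is precisely where the standing support hypothesis on $f$ enters: since $f$, and hence its weak derivative, vanishes near $\partial U$, that boundary contribution is annihilated and the identity of the second paragraph persists for all $x\in\R^d$. Once this identity is secured the remaining estimate is entirely routine, which is why the lemma is genuinely simple despite this bookkeeping.
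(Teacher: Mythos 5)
Your argument is correct and follows essentially the same route as the paper: establish the identity $\partial^\alpha(f*\phi)(x)=\int_{U\cap\Lambda(x)}\phi(x-y)\,\partial^\alpha f(y)\,dy$ (the paper simply invokes Remark~\ref{remark:part1}, whereas you unpack the integration by parts behind it) and then estimate using $\phi\geq 0$, $\operatorname{supp}\phi=K$, and $\int_{\R^d}\phi=1$. Your closing observation that the derivative transfer for $\Lambda(x)\not\subset U$ rests on the standing hypothesis $\operatorname{supp}(f)\subset U$ is accurate and is exactly what the paper's appeal to Remark~\ref{remark:part1} (hence Theorem~\ref{theorem:convweak}(1)) presupposes.
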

\begin{proof}
     By using Remark \ref{remark:part1} we get $$\partial^\alpha(f*\phi)(x)=(\phi*1_U\partial^\alpha f)(x)=\integral{U}{}{\phi(x-y)\partial^\alpha f(y)}dy=\integral{U\cap\Lambda(x)}{}{\phi(x-y)\partial^\alpha f(y)}dy.$$
Using this equation and the following inequalities, we get the result
\begin{align*}
|\partial^\alpha (f*\phi)(x)| & \leq \sup_{z\in U\cap\Lambda(x)}|\partial^\alpha f(z)|\integral{U\cap\Lambda(x)}{}{\phi(x-y)}
dy\\  &\leq \sup_{z\in U\cap\Lambda(x)}|\partial^\alpha f(z)|\integral{\R^d}{}{\phi(x)}dx=\sup_{z\in U\cap\Lambda(x)}|\partial^\alpha f(z)|.
\end{align*}
\end{proof}
\begin{remark}
      Note that the value of the right-hand side of the inequality in Lemma \ref{lemma:key} can be infinity. %As a matter of fact, finding a stronger inequality that can guarantee the finiteness of the left-hand side of the above equation can lead to more flexible assumptions than the ones that we have in our main result, Theorem \ref{theorem:main}.
\end{remark}
%**********************************************************************************************************************************************
%**********************************************************************************************************************************************
%**********************************************************************************************************************************************

\section{Discussion of Assumptions and Key Tools}\label{sec:dakt}
In applying classical It\^o's formula on smooth functions $f:[0,\infty)\times U\longmapsto\R$, $U\subset\R^d$, the differentiability at $t=0$ is understood by being the right-hand side derivative. Note that since the Lebesgue measure of $\{0\}\times U$ is zero, the weak derivatives of $f$ can be defined similar to Definition \ref{def:weak1}.

Assume that $f:[0,\infty)\times U\longmapsto\R$ is a Lebesgue measurable function. In accordance with Definition\ref{def:weak1}, we say that $f\in L_{loc}^1([0,\infty)\times U)$ has weak derivatives $\partial^\alpha f\in L_{loc}^1([0,\infty)\times U)$ if
\begin{equation}\label{eq:weak2}
\integral{[0,\infty)\times U}{}{(\partial^\alpha f(x))\phi(x)}dx=(-1)^{|\alpha|}\integral{[0,\infty)\times U}{}{f(x)(\partial^\alpha\phi(x))}dx,\; \text{for all}\;\phi\in C_c^\infty([0,\infty)\times U).
\end{equation}
Note that since a test function $\phi$ is smooth, its derivatives at the origin are understood as the right-hand side ones. The results of Section \ref{sec:pd} are stated for open subsets of $\R^d.$ However, $[0,\infty)\times U$ is not an open set. So in our first step we fix this problem by introducing an extended version of $f$.

Suppose that the function $f:[0,\infty)\times U\longmapsto\R$ is continuous on $[0,\infty)\times U$. This function can be continuously extended to a new function $\tilde f:\R\times U\longmapsto\R$:
\begin{equation}\label{eq:weakex}
\tilde f(t,x)=\left\{
         \begin{array}{ll}
           f(t,x), & \hbox{$(t,x)\in [0,\infty)\times U$;} \\
           f(-t,x), & \hbox{$(t,x)\in (-\infty,0)\times U.$}
         \end{array}
       \right.
\end{equation}
Now in addition assume that $f\in L_{loc}^1([0,\infty)\times U)$ and it is weakly differentiable in the sense of equation \eqref{eq:weak2}. Then one can easily show that $\tilde f\in L_{loc}^1(\R\times U)$ and it is weakly differentiable on the open set $\R\times U$ in the sense of Definition \ref{def:weak1}. The weak derivatives of $\tilde f$ can be stated explicitly based on weak derivatives of $f$. For instance in the case of $d=1$, one can easily check that
$$\frac{\partial\tilde f}{\partial t}(t,x)=\left\{
         \begin{array}{ll}
           \frac{\partial f}{\partial t}(t,x), & \hbox{$(t,x)\in [0,\infty)\times U$;} \\
           -\frac{\partial f}{\partial t}(-t,x), & \hbox{$(t,x)\in (-\infty,0)\times U,$}
         \end{array}
       \right.$$
and
$$\frac{\partial\tilde f}{\partial x}(t,x)=\left\{
         \begin{array}{ll}
           \frac{\partial f}{\partial x}(t,x), & \hbox{$(t,x)\in [0,\infty)\times U$;} \\
           \frac{\partial f}{\partial x}(-t,x), & \hbox{$(t,x)\in (-\infty,0)\times U,$}
         \end{array}
       \right.$$
where $\frac{\partial f}{\partial t}(t,x)$ and $\frac{\partial f}{\partial x}(t,x)$ are weak derivatives of $f$ in the sense of equation  \eqref{eq:weak2}.

%The above discussion also serves for one more purpose. This lets us to work on an open domain rather than a closed one.%The results of Section \ref{sec:pd} are stated on the whole domain $\R^d$. Therefore the extension of functions helps us to apply these results on functions which are originally defined on parts of the domain. An alternative approach would be to use the general version of those results in Section \ref{sec:pd} that are applicable on open sub-domains.

Assume that $(\Omega,\mathfrak F,\pr)$ is a complete probability space. Let $X=(X_t)_{t\geq0}$, $X_t:\Omega\longmapsto U$, $U\subset\R^d$, be a c\`adl\`ag stochastic process that is defined on this space. In any extension of It\^o's formula, it is important to somehow measure the amount of time that the process spends in some certain regions of the domain. In particular, this is crucial for those points for which the function is not smooth. For instance, in the case of Meyer-It\^o formula (see Theorem 70, Chapter \Rmnum{4} of \cite{protter}), this is done through local times. In the next proposition, we discuss a similar tool which is a key result in our extension. The proposition is provided for a certain class of processes explained below.

\begin{assump}\label{assump:main}
     Suppose that $X:[0,\infty)\times\Omega\longmapsto U$ is a c\`adl\`ag stochastic process defined on the complete probability space $(\Omega,\mathfrak F,\pr)$, that satisfies the following condition: If $A\subset U$ is a Borel measurable set such that $m(A)=0$, where $m$ is the Lebesgue measure, then for all $s\in\R^+$, $\pr(X_s\in A)=0$. In other words, for all $s\in\R^+$, the measure $\mu_s$ on $U$ defined by $\mu_s(A)=\pr(X_s\in A)$, is absolutely continuous with respect to the Lebesgue measure.
\end{assump}
\begin{prop}\label{prop:main}
     Assume that the process $X$ satisfies Assumption \ref{assump:main}. Let $A\subset [0,\infty)\times U$ be any Lebesgue measurable set such that $m(A)=0$, then for all $t\geq0$ we have $$\pr\set{\omega\in\Omega}{m(\set{s\in[0,t]}{(s,X_s(\omega))\in A})=0}=1.$$
    In particular, this implicitly implies that for almost all $\omega\in\Omega$, the set $\set{s\in[0,t]}{(s,X_s(\omega))\in A}$ is Lebesgue measurable for all $t\geq0$.
\end{prop}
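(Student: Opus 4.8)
The plan is to show that the expected one-dimensional Lebesgue measure of the random time set vanishes, from which the almost-sure statement follows since a nonnegative random variable with zero expectation is zero almost surely. Writing the measure of the time set as $\integral{0}{t}{1_A(s,X_s)}ds$, the target reduces to proving $\expect{\integral{0}{t}{1_A(s,X_s)}ds}=0$, and the engine is a Tonelli argument that pairs the $(d+1)$-dimensional Fubini theorem with Assumption \ref{assump:main}.

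Before invoking Tonelli I would first repair a measurability gap: since $A$ is only Lebesgue measurable, $1_A$ need not be Borel and $(s,\omega)\longmapsto 1_A(s,X_s(\omega))$ need not be jointly measurable. By completeness of the Lebesgue measure, I choose a Borel set $B$ with $A\subset B\subset[0,\infty)\times U$ and $m(B)=0$. Because $X$ is c\`adl\`ag, the map $(s,\omega)\longmapsto X_s(\omega)$ is jointly measurable on $[0,\infty)\times\Omega$, so $(s,\omega)\longmapsto 1_B(s,X_s(\omega))$ is measurable and Tonelli applies to this nonnegative Borel integrand. I would then compute $\expect{\integral{0}{t}{1_B(s,X_s)}ds}=\integral{0}{t}{\pr((s,X_s)\in B)}ds=\integral{0}{t}{\mu_s(B_s)}ds$, where $B_s=\set{x\in U}{(s,x)\in B}$ is the $s$-section of $B$. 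The $(d+1)$-dimensional Fubini theorem applied to $m(B)=0$ forces $m(B_s)=0$ for Lebesgue-almost every $s$, and for each such $s$ Assumption \ref{assump:main} yields $\mu_s(B_s)=\pr(X_s\in B_s)=0$. Hence the last integral is zero, so $\integral{0}{t}{1_B(s,X_s)}ds=0$ almost surely; that is, the time set associated with $B$ has one-dimensional Lebesgue measure zero with probability one.

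To pass back to $A$, I use the inclusion $\set{s\in[0,t]}{(s,X_s(\omega))\in A}\subset\set{s\in[0,t]}{(s,X_s(\omega))\in B}$. On the almost-sure event where the set on the right is Lebesgue null, the set on the left is a subset of a Lebesgue-null set, hence itself Lebesgue measurable with measure zero by completeness of the Lebesgue measure. This delivers both the displayed identity and the ``in particular'' measurability claim at once. To obtain the phrasing ``for all $t\geq0$'' simultaneously, I intersect over $t\in\mathbb{N}$ the corresponding almost-sure events and note that monotonicity of the time set in $t$ covers all intermediate values.

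The main obstacle is precisely the measurability subtlety: the composition of the merely Lebesgue-measurable indicator $1_A$ with the process need not be measurable, and a naive Fubini argument would be ill-posed. The reduction to a Borel superset $B$, combined with completeness of the Lebesgue measure, is what simultaneously legitimizes the Tonelli computation and secures the final measurability assertion; the remaining steps are standard applications of Fubini and Assumption \ref{assump:main}.
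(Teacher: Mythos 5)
Your proof is correct and follows essentially the same route as the paper: joint measurability of $(s,\omega)\longmapsto(s,X_s(\omega))$ for c\`adl\`ag processes, Tonelli to reduce the expectation of the occupation time to $\integral{0}{t}{\pr(X_s\in B_s)}ds$, the $(d+1)$-dimensional Fubini theorem applied to $m(B)=0$ to kill the sections, and Assumption \ref{assump:main} to conclude, followed by the vanishing of a nonnegative random variable with zero mean. The only (harmless) differences are that you pass to a Borel null superset of $A$ at the outset, whereas the paper first treats Borel $A$ and then handles general Lebesgue measurable $A$ by the decomposition $A=A'\cup A''$, and that you make explicit the intersection over $t\in\mathbb{N}$ needed for the ``for all $t\geq0$'' measurability claim, which the paper leaves implicit.
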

\begin{proof}
     First assume that $A$ is a Borel measurable set. Define the process $Y:[0,\infty)\times\Omega\longmapsto[0,\infty)\times U$ by $Y(s,\omega)=(s,X_s(\omega))$. The process $Y$ is c\`adl\`ag and by Proposition 1.21 of \cite{jacod}, $Y$ is $\mathfrak B_{[0,\infty)}\times\mathfrak F$ measurable, where $\mathfrak B_{[0,\infty)}$ is the Borel $\sigma$-algebra on $[0,\infty)$ and $\mathfrak F$ is the $\sigma$-algebra on $\Omega$. Hence $Y^{-1}(A)$ belongs to $\mathfrak B_{[0,\infty)}\times\mathfrak F$ and so $\llbracket0,t\rrbracket\cap Y^{-1}(A)$ is in $\mathfrak B_{[0,\infty)}\times\mathfrak F\subset\mathcal L\times\mathfrak F$, where $\llbracket0,t\rrbracket=[0,t]\times\Omega$, and $\mathcal L$ is Lebesgue $\sigma$-algebra on $[0,\infty)$. Therefore the function $f:[0,\infty)\times\Omega\longmapsto\R$ defined by $f:=1_{\llbracket0,t\rrbracket\cap Y^{-1}(A)}$ belongs to $L^1(m\times\pr)$.

From Fubini-Tonelli Theorem, see Theorem 2.37 of \cite{folland}, it follows that $f_\omega$ defined by $f_\omega:=(.,\omega)$ is in $ L^1(m)$ for almost all $\omega$. So for a fixed $\omega$, $\llbracket0,t\rrbracket\cap Y^{-1}(A)$ is Lebesgue measurable, and $m\set{s\in[0,t]}{(s,X_s(\omega))\in A}$ is well defined for almost all $\omega\in\Omega$.

Moreover, let $Z(\omega):=\integral{}{}{f_\omega}dm=m\set{s\in[0,t]}{\left(s,X_s(\omega)\right)\in A}$, then again by Fubini-Tonelli Theorem $Z$ is a random variable and $Z\in L^1(\pr)$, furthermore,  we can calculate its expectation
\begin{align*}
\expect{Z} &= \int\int f_\omega\;dm\;d\pr=\integral{0}{t}{\integral{}{}{f_s}d\pr}ds
\\  &= \integral{0}{t}{\expect{1_{\{(s,X_s)\in A\}}}}ds.
\end{align*}
Note that for a  fixed $s$, $1_{\{(s,X_s)\in A\}}=1_{\{X_s\in A_s\}}$, where $A_s=\set{y\in\R^d}{(s,y)\in A}$ is Borel measurable, hence we obtain
\begin{equation}\label{eq:mainprop1}
     \expect{Z}=\integral{0}{t}{\pr(X_s\in A_s)}ds.
\end{equation}
The set $A$ is Borel measurable and hence Lebesgue measurable as well. By Theorem 2.36 of \cite{folland} the function $s\longmapsto m(A_s)$ is Lebesgue measurable and $m(A)=\integral{[0,\infty)}{}{m(A_s)}ds$. By the proposition's assumption $m(A)=0$ which concludes that $m(A_s)=0$ for Lebesgue almost all $s\geq0$, i.e. there exists a set $N\subset[0,\infty)$ such that $m(N)=0$ and if $s\notin N$ then $m(A_s)=0.$ From equation \eqref{eq:mainprop1} and Assumption \ref{assump:main}, we get
$$\expect{Z}=\integral{[0,t]\cap N^c}{}{\pr(X_s\in A_s)}ds=\integral{[0,t]\cap\set{s}{m(A_s)=0}}{}{\pr(X_s\in A_s)}ds=0.$$
The random variable $Z$ is non-negative and $\expect{Z}=0$, hence $Z=0$, $\pr$-almost surely which means that for almost all $\omega\in\Omega$, $m(\set{s\in[0,t]}{(s,X_s(\omega))\in A})=0$. This completes the proof when $A$ is Borel measurable.

Next, suppose that $A$ is a Lebesgue measurable set, then $A=A^{'}\cup A^{''}$, $A^{''}\subset B$, where $A^{'}$ and $B$ are Borel measurable and $m(B)=0$. Now the result follows from the previous part, and the facts that $m(A)=0$ and the probability space is complete.
\end{proof}
Note that if $A=[0,t]\times B$, where $B\subset\R^d$ a Borel set, then $\set{s\in[0,t]}{(s,X_s)\in A}$ is the amount of time that the process $X$ spends in Borel set $B$. So under Assumption \ref{assump:main}, Proposition \ref{prop:main} concludes that almost surely the Lebesgue measure of this amount of time is zero for any zero Borel measurable set. 

We would like to point out that this measure can be quite different than local times. For instance, let $X$ be a standard Brownian motion, then by Proposition \ref{prop:main}, $m\set{s\in[0,t]}{X_s=a}=0$, $\pr$-almost surely for all real numbers $a$ whereas the local time of a Brownian motion at the level $a$ is not zero. This is also because of the fact that as a measure the local time of a Brownian motion is singular with respect to the Lebesgue measure.
%**********************************************************************************************************************************************
%**********************************************************************************************************************************************
%**********************************************************************************************************************************************
\section{The Main Result}\label{sec:mr}
In this section, we state and prove our main result. First, we mention that the result holds for a finite variation L\'evy process that satisfies Assumption \ref{assump:main}. This assumption is not valid for a compound Poisson process $X$ as $\pr[X_t=0]>0$, for $t>0$, and therefore, the measure $\mu_t$ defined in Assumption \ref{assump:main}  is not absolutely continuous with respect to the Lebesgue measure, see Remark 27.3 of \cite{sato}. 
%Assume that $X$ is a finite variation L\'evy process which is not a compound Poisson process; note that a non-zero drift ensures this. 
%Based on Theorem 27.7 and Remark 27.3 of  \cite{sato}, 
However, based on Theorem 27.7 of \cite{sato}, Assumption \ref{assump:main} is always satisfied for a finite variation L\'evy process with infinite activity, if its L\'evy measure is absolutely continuous with respect to the Lebesgue measure.

 For simplicity we present the theorem for the case of $d=1$, however there is no restriction on extending the result to a general $d$.
\begin{theorem}\label{theorem:main}
     Assume that $f:[0,\infty)\times U\longmapsto\R$ is a continuous function on $[0,\infty)\times U$ such that $f\in L_{loc}^1([0,\infty)\times U)$, $supp(f)\subset[0,\infty)\times U$, and $U$ is an open set of $\R$. Let the weak derivatives $\frac{\partial f}{\partial s},\frac{\partial f}{\partial x}\in L_{loc}^1([0,\infty)\times U)$ be locally bounded and defined by equation \eqref{eq:weak2}. Suppose that $X$ is a finite variation L\'evy process satisfying Assumption \ref{assump:main}  such that for all $t\geq0$, $X_t$ and $X_{t^-}$ are in $U$. Then
\begin{align*}
f(t,X_t) = f(0,X_0)&+\integral{0}{t}{\frac{\partial f}{\partial s}(s,X_s)}ds+\gamma\integral{0}{t}{\frac{\partial f}{\partial x}(s,X_s)}ds
%\\  &+ \integral{[0,t]\times\R}{}{\parenth{f(s,X_{s^-}+x)-f(s,X_{s^-})}}J_X(ds\times dx).
\\  &+ \iint\limits_{[0,t]\times \R}\parenth{f(s,X_{s^-}+x)-f(s,X_{s^-})}\;J_X(ds\times dx),
%\\  &+ \iint{[0,t]\times\R}{\parenth{f(s,X_{s^-}+x)-f(s,X_{s^-})}}J_X(ds\times dx).
\end{align*}
where $J_X$ and $\gamma$ are respectively the Poisson random measure and the drift coefficient of the process $X$ admitting the following representation: $X_t=\gamma t+\integral{[0,t]\times \R}{}{x}J_X(ds\times dx)$.
\end{theorem}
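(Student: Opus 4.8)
The plan is to prove the identity by mollification: approximate $f$ by smooth functions $f^\epsilon$, apply the classical It\^o (change-of-variables) formula for finite variation semimartingales to each $f^\epsilon$, and then let $\epsilon\to0^+$. Since $[0,\infty)\times U$ is not open, I first replace $f$ by its even-in-time extension $\tilde f$ on $\R\times U$ from \eqref{eq:weakex}; by the discussion in Section \ref{sec:dakt}, $\tilde f$ is continuous, lies in $L^1_{loc}(\R\times U)$, and has weak derivatives $\frac{\partial\tilde f}{\partial s},\frac{\partial\tilde f}{\partial x}$ which, being reflections of those of $f$, are again locally bounded. Putting $f^\epsilon:=\eta^\epsilon*(\tilde f1_{\R\times U})$, Theorems \ref{theorem:convsmooth} and \ref{theorem:convweak} give $f^\epsilon\in C^\infty(\R^2)$ together with $\frac{\partial f^\epsilon}{\partial s}=\eta^\epsilon*\frac{\partial\tilde f}{\partial s}$ and $\frac{\partial f^\epsilon}{\partial x}=\eta^\epsilon*\frac{\partial\tilde f}{\partial x}$ on $\R\times U$.

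Because $X$ has finite variation and no continuous martingale part, the It\^o formula for semimartingales applied to the smooth $f^\epsilon$ has no second-order term and collapses to
\begin{align*}
f^\epsilon(t,X_t)=f^\epsilon(0,X_0)&+\integral{0}{t}{\frac{\partial f^\epsilon}{\partial s}(s,X_s)}ds+\gamma\integral{0}{t}{\frac{\partial f^\epsilon}{\partial x}(s,X_{s^-})}ds\\
&+\iint\limits_{[0,t]\times\R}\parenth{f^\epsilon(s,X_{s^-}+x)-f^\epsilon(s,X_{s^-})}\;J_X(ds\times dx),
\end{align*}
the jump integral being exactly $\sum_{0<s\le t}\parenth{f^\epsilon(s,X_s)-f^\epsilon(s,X_{s^-})}$. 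It remains to pass to the limit $\epsilon\to0^+$ in each term. Note that since the jump times are countable, the drift integral is unchanged if $X_{s^-}$ is replaced by $X_s$, which is the form in which Assumption \ref{assump:main} will be used.

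The boundary and jump terms are the easier ones. As $f$ is continuous, $f^\epsilon\to f$ uniformly on compact subsets of $\R\times U$, and since the c\`adl\`ag path has range (together with its left limits) contained in a compact subset $K(\omega)\subset U$ on $[0,t]$, the terms $f^\epsilon(t,X_t)$ and $f^\epsilon(0,X_0)$ converge almost surely. For the jump sum I use that finite variation gives $\sum_{0<s\le t}|\Delta X_s|<\infty$ almost surely; by the mean value theorem and the uniform bound $|\frac{\partial f^\epsilon}{\partial x}|\le C(\omega)$ obtained from Lemma \ref{lemma:key} and the local boundedness of $\frac{\partial\tilde f}{\partial x}$ on $[0,t]\times K(\omega)$, each summand is dominated by $C(\omega)|\Delta X_s|$, so dominated convergence over the summable jumps lets the limit pass inside and produces the Poisson integral of the theorem, which is finite because $f(s,\cdot)$ is locally Lipschitz.

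The crux is the convergence of the two Lebesgue integrals; write $\partial_\bullet$ for either $\frac{\partial}{\partial s}$ or $\frac{\partial}{\partial x}$. Here I combine three facts. First, Theorem \ref{theorem:convweak} part (3) gives $\partial_\bullet f^\epsilon\to\partial_\bullet f$ pointwise on the Lebesgue set $L_{\partial_\bullet f}$, whose complement $A$ in the domain is Lebesgue-null by Theorem \ref{theorem:ldt}. Second, Assumption \ref{assump:main} together with Proposition \ref{prop:main} guarantees that almost surely $m\set{s\in[0,t]}{(s,X_s)\in A}=0$, so that $\partial_\bullet f^\epsilon(s,X_s)\to\partial_\bullet f(s,X_s)$ for Lebesgue-almost every $s\in[0,t]$. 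Third, Lemma \ref{lemma:key} supplies the uniform-in-$\epsilon$ domination $|\partial_\bullet f^\epsilon(s,X_s)|\le C(\omega)$ for $\epsilon$ small, again finite by local boundedness on the enlarged compact range. Dominated convergence on $([0,t],m)$, applied along a sequence $\epsilon_n\to0^+$, then yields convergence of both integrals and completes the proof. I expect this last step to be the main obstacle: the pointwise convergence of the mollified derivatives along the trajectory fails on a Lebesgue-null exceptional set $A$, and ruling out that the process spends positive time in $A$ is precisely where the absolute continuity of the one-dimensional marginals of $X$ (Assumption \ref{assump:main}, via Proposition \ref{prop:main}) is indispensable.
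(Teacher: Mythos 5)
Your proposal follows essentially the same route as the paper's proof: reflect $f$ in time to get $\tilde f$ on the open set $\R\times U$, mollify, apply the classical It\^o formula to the smooth approximations, and pass to the limit term by term using exactly the same three ingredients for the Lebesgue integrals (pointwise convergence of mollified weak derivatives on the Lebesgue set via Theorem \ref{theorem:convweak}, the zero-occupation-time statement of Proposition \ref{prop:main} under Assumption \ref{assump:main} to neutralize the null exceptional set along the trajectory, and the domination from Lemma \ref{lemma:key} plus local boundedness to invoke dominated convergence); the treatment of the jump sum via the mean value theorem and $\sum_{0<s\le t}|\Delta X_s|<\infty$ is likewise identical. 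The only difference of substance is that your argument stops at the statement that the identity holds $\pr$-almost surely \emph{for each fixed} $t$, i.e.\ that the two sides are modifications of one another, with the null set a priori depending on $t$. The paper's proof has an additional final step: it verifies that both sides of the identity are right-continuous in $t$ (continuity of $f(t,X_t)$ and of the two Lebesgue integrals, and a separate argument, again using the Lipschitz-type bound from Lemma \ref{lemma:key}, for the right continuity of the jump-sum process), and then invokes the standard fact that right-continuous modifications are indistinguishable, so that the formula holds for all $t$ simultaneously outside a single null set. If the theorem is read as an identity of processes — which is how It\^o-type formulas are normally used, e.g.\ in the semimartingale decomposition of the paper's second example — you need to supply this upgrade; everything else in your proposal is in order.
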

\begin{proof}
     Assume that $\tilde f$ is an extension of the function $f$ to $\R\times U$ given by equation \eqref{eq:weakex}, note that $supp(\tilde f)\subset\R\times U$. Let $\phi_n=\eta^{\frac{1}{n}}$ and $f_n(t,x):=(\phi_n*\tilde f1_{\R\times U})(t,x)$, where $(t,x)\in\R^2$, $n\geq1$, and $\eta^{\frac{1}{n}}$ is defined in Section \ref{sec:pd}. Since $\tilde f\in L_{loc}^1(\R\times U)$, by Theorem \ref{theorem:convweak}, $f_n\in C^\infty(\R\times\R)$ for all $n\geq1$. Hence from It\^o's formula, see Theorem 4.2 of \cite{kyprianou}, we have
\begin{align*}
f_n(t,X_t) = f_n(0,X_0)&+\integral{0}{t}{\frac{\partial f_n}{\partial s}(s,X_s)}ds+\gamma\integral{0}{t}{\frac{\partial f_n}{\partial x}}ds
\\  &+ \iint\limits_{[0,t]\times \R}\parenth{f_n(s,X_{s^-}+x)-f_n(s,X_{s^-})}\;J_X(ds\times dx).
\end{align*}
The rest of the proof is divided into five steps:

\textbf{Step 1.} Since $\tilde f$ is a continuous function, by Lemma \ref{lemma:fcont} $L_{\tilde f}=\R\times U$. On the other hand for all $t\geq0$, $X_t$ is in $U$ and so by Theorem \ref{theorem:convsmooth}, $f_n(t,X_t)\longrightarrow\tilde f(t,X_t)$, for all $\omega\in\Omega$ and $t\in\R$. Especially
$f_n(0,X_0)\longrightarrow\tilde f(0,X_0)$. Also note that for $t\geq0$, $\tilde f(t,X_t)=f(t,X_t)$ by the definition of $\tilde f$.

\textbf{Step 2.} From Theorem \ref{theorem:convweak},  if $(s,X_s)\in L_{\frac{\partial\tilde f}{\partial s}}$, then we have $$\frac{\partial f_n}{\partial s}(s,X_s)\longrightarrow\frac{\partial\tilde f}{\partial s}(s,X_s).$$ Let $L_1=\R\times  U-L_{\frac{\partial\tilde f}{\partial s}}$, then
\begin{align*}
      \integral{0}{t}{\frac{\partial f_n}{\partial s}(s,X_s)}ds &= \integral{0}{t}{\frac{\partial f_n}{\partial s}(s,X_s)1_{\{(s,X_s)\notin L_1\}}}ds+\integral{0}{t}{\frac{\partial f_n}{\partial s}(s,X_s)1_{\{(s,X_s)\in L_1\}}}ds.
\end{align*}
By Theorem \ref{theorem:ldt}, $m(L_1)=0$, therefore by Proposition \ref{prop:main}, $m\set{s\in[0,t]}{(s,X_s)\in L_1}=0$, $\pr$-almost surely. Hence because of the properties of Lebesgue integral, for each fixed $t$, the integral $$\integral{0}{t}{\frac{\partial f_n}{\partial s}(s,X_s)1_{\{(s,X_s)\in L_1\}}}ds=\integral{[0,t]\cap\{s:\;(s,X_s)\in L_1\}}{}{\frac{\partial f_n}{\partial s}(s,X_s)}ds,$$ is zero $\pr$-almost surely. Therefore for a fixed $t$, $$\integral{0}{t}{\frac{\partial f_n}{\partial s}(s,X_s)}ds=\integral{0}{t}{\frac{\partial f_n}{\partial s}(s,X_s)1_{\{(s,X_s)\notin L_1\}}}ds,\;\pr-\;\text{almost surely}.$$  
By Lemma \ref{lemma:key}, for all $(s,x)\in\R^2$, $|\frac{\partial f_n}{\partial s}(s,x)|\leq\sup_{z\in (\R\times U)\cap\Lambda(s,x)}|\frac{\partial\tilde f}{\partial s}(z)|\leq \sup_{z\in \Lambda(s,x)}|\frac{\partial\tilde f}{\partial s}(z)|$, where $\Lambda(s,x)=\set{y\in\R^2}{(s,x)-y\in K}$, and $K=\sup\phi_n=\overline{B_{\frac{1}{n}}(0)}\subset\overline{B_1(0)}$ which results $$|\frac{\partial f_n}{\partial s}(s,X_s)|\leq\sup_{z\in\Lambda(s,X_s)}|\frac{\partial\tilde f}{\partial s}(z)|,\;0\leq s\leq t.$$

For a fixed $\omega\in\Omega$, $\Lambda(s,X_s)$ is bounded, because $X$ is bounded on $[0,t]$ (due to being a c\`adl\`ag process). Therefore for a fixed $\omega\in\Omega$ and $s\in[0,t]$, one can find an upper bound for $|\frac{\partial f_n}{\partial s}(s,X_s)|$ that depends only on $\omega$, $t$, and the minimum, maximum of $\frac{\partial\tilde f}{\partial s}(s,X_s)$ on $[0,t]$. This upper bound is finite because the weak derivatives of $f$ are locally bounded by the assumption of the theorem and so the weak derivatives of $\tilde f$ must be locally bounded too. Therefore, one can apply Lebesgue Dominated Convergence theorem and we obtain:
$$\lim_{n\rightarrow\infty}\integral{0}{t}{\frac{\partial f_n}{\partial s}(s,X_s)}ds=\integral{0}{t}{\lim_{n\rightarrow\infty}\frac{\partial f_n}{\partial s}(s,X_s)1_{\{(s,X_s)\notin L_1\}}}ds,\;\pr-\;\text{almost surely}.$$ By Theorem \ref{theorem:convweak}, this is $\pr$-almost surely equal to $\integral{0}{t}{\frac{\partial\tilde f}{\partial s}(s,X_s)1_{\{(s,X_s)\notin L_1\}}}ds$. Since $\pr$-almost surely, $m\set{s\in[0,t]}{(s,X_s)\in L_1}=0$, and for each $s\in[0,t]$, $\frac{\partial\tilde f}{\partial s}(s,X_s)=\frac{\partial f}{\partial s}(s,X_s)$, we have
$$\lim_{n\rightarrow\infty}\integral{0}{t}{\frac{\partial f_n}{\partial s}(s,X_s)}ds=\integral{0}{t}{\frac{\partial f}{\partial s}(s,X_s)}ds,\;\pr-\;\text{almost surely}.$$

\textbf{Step 3.} Similar to Step 2, one can prove that $$\lim_{n\rightarrow\infty}\integral{0}{t}{\frac{\partial f_n}{\partial x}(s,X_s)}ds=\integral{0}{t}{\frac{\partial f}{\partial x}(s,X_s)}ds,\;\pr-\;\text{almost surely}.$$

\textbf{Step 4.} Let $I_n=\iint\limits_{[0,t]\times\R}\parenth{f_n(s,X_{s^-}+x)-f_n(s,X_{s^-})}\;J_X(ds\times dx)$, by using mean-value theorem we have $|f_n(s,X_{s^-}+x)-f_n(s,X_{s^-})|=|\frac{\partial f_n}{\partial x}(s,C)|\,|x|$, where $C$ is a random variable between $X_{s^-}$ and $X_{s^-}+x$. By applying Lemma \ref{lemma:key} and the same procedure as Step 2, we can show that $|f_n(s,X_{s^-}+x)-f_n(s,X_{s^-})|\leq C^{'}|x|$, where $C^{'}$ is a finite random variable, free from $s$, $x$, $n$. On the other hand, since $X$ is a finite variation L\'evy process, we also have that $\integral{[0,t]\times\R}{}{|x|}J_X{(ds\times dx)}<\infty$, $\pr$-almost surely.

Therefore by applying Lebesgue Dominated Convergence theorem, one can interchange the limit and the integral in expression $I_n$ as $n$ goes to infinity. Since $L_{\tilde f}=\R\times U\supseteq[0,t]\times U$, and for all $s\geq0$, $X_s$ and $X_{s^-}$ are in $U$, by part three of Theorem \ref{theorem:convsmooth}, we get
\begin{align*}
   \lim_{n\rightarrow\infty}I_n &= \iint\limits_{[0,t]\times\R}\parenth{\tilde f(s,X_{s^-}+x)-\tilde f(s,X_{s^-})}\;J_X(ds\times dx)
   \\  &=\iint\limits_{[0,t]\times\R}\parenth{f(s,X_{s^-}+x)-f(s,X_{s^-})}\;J_X(ds\times dx).
\end{align*}

\textbf{Step 5.} From Steps 1, 2, 3, 4, for a fixed $t\geq0$, we have $\pr$-almost surely the following identity 
\begin{align}\label{eq:step5}
f(t,X_t) = f(0,X_0)&+\integral{0}{t}{\frac{\partial f}{\partial s}(s,X_s)}ds+\gamma\integral{0}{t}{\frac{\partial f}{\partial x}}ds\nonumber
\\  &+ \iint\limits_{[0,t]\times\R}\parenth{f(s,X_{s^-}+x)-f(s,X_{s^-})}\;J_X(ds\times dx).
\end{align}
The process $X$ is c\`adl\`ag, so the left-hand side and the right-hand side of the above equality are well defined processes. Therefore so far we have shown that the two sides of the above equation (when considered as processes) are in fact modifications of each other. Now we prove that as processes the left-hand side and the right-hand side are indeed indistinguishable.
\begin{enumerate}
   \item First note that since $f$ is continuous on $[0,\infty)\times U$, then $(f(t,X_t))_{t\geq0}$ is c\`adl\`ag.
   \item The function $\frac{\partial f}{\partial s}$ is Borel measurable and for a fixed $\omega\in\Omega$, $(X_s)_{0\leq s\leq t}$ is also Borel measurable. Hence for a fixed $\omega$, $\frac{\partial f}{\partial s}(s,X_s)$ is Borel measurable. So it is also Lebesgue measurable and by Fundamental theorem of Lebesgue integral calculus $t\longmapsto\integral{0}{t}{\frac{\partial f}{\partial s}(s,X_s)}ds$ is uniformly continuous in $t$. Note that in Step 2, we actually showed that $\frac{\partial\tilde f}{\partial s}(s,X_s)$ is Lebesgue integrable and on $[0,t]$, $\frac{\partial\tilde f}{\partial s}(s,X_s)=\frac{\partial f}{\partial s}(s,X_s)$.
   \item Similarly to the previous case, $t\longmapsto\integral{0}{t}{\frac{\partial f}{\partial x}(s,X_s)}ds$ is also continuous in $t$.
   \item Let $Z_t:=\iint\limits_{[0,t]\times\R}\parenth{f(s,X_{s^-}+x)-f(s,X_{s^-})}\;J_X(ds\times dx)$. For all $s\geq0$, $X_s$ and $X_{s^-}$ are in $U$, therefore  $Z_t=\sum_{0\leq s\leq t}\parenth{f(s,X_s)-f(s,X_{s^-})}$. If the function $f$ is $C^{1,1}$, then obviously the process $Z=(Z_t)_{t\geq0}$ is right continuous. However, since here $f$ is not necessarily smooth, to show the right continuity of $Z$, we do as follows:
\begin{align*}
   \lim_{h\rightarrow0^+}|Z_{t+h}-Z_t| & = \lim_{h\to0^+}|\sum_{t<s\leq t+h}\parenth{f(s,X_s)-f(s,X_{s^-})}|
\\  &\leq\lim_{h\rightarrow0^+}\sum_{t<s\leq t+h}|f(s,X_s)-f(s,X_{s^-})|\\
    &=\lim_{h\rightarrow0^+}\sum_{t<s\leq t+h}|\lim_{n\rightarrow\infty}\left(f_n(s,X_s)-f_n(s,X_{s^-})\right)|\\
    &\leq\lim_{h\rightarrow0^+}\sum_{t<s\leq t+h}C^{''}|\Delta X_s|,
\end{align*}
where similar to Step 4, one can show that $C^{''}$ is a finite random variable free from $s$, $h$, $n$, so we obtain $$\lim_{h\rightarrow0^+}|Z_{t+h}-Z_t|\leq C^{''}\lim_{h\rightarrow0^+}\sum_{t<s\leq t+h}\Delta X_s=0,\;\pr-\;\text{almost surely}.$$
This shows that the process $Z$ is right continuous.

 Thus the left-hand side and the right-hand side of equation \eqref{eq:step5}, when considered as processes, are right continuous, and we already know that they are also modification of each other. By Theorem 4, Chapter \Rmnum{1} of \cite{protter}, we conclude that the left-hand side and the right-hand side of this equation define two processes that are indistinguishable.

This proves our theorem.
\end{enumerate}
\end{proof}
The next example shows that even in one dimensional cases, there are simple functions for which Meyer-It\^o formula is not applicable but Theorem \ref{theorem:main} can be used.
\begin{example}
     Assume that $X:[0,\infty)\times\Omega\longmapsto\R$ is a finite variation L\'evy process that satisfies Assumption   \ref{assump:main}. Let the function $f:\R\longmapsto\R$ be defined by 
$$ f(x)=\left\{
         \begin{array}{ll}
           x^2\sin(\frac{1}{x}), & \hbox{$x\neq0$;} \\
           0, & \hbox{$x=0.$}
         \end{array}
\right.$$
This function is continuous, but its derivative is not continuous at origin. So the classical It\^o's formula cannot be applied. Moreover, one can show that $f$ cannot be written as the difference of two convex functions, and hence Meyer-It\^o's formula (Theorem 70, Chapter \Rmnum{4} of \cite{protter}) is not applicable as well. However, $f$ is weakly differentiable, its weak derivative is locally bounded, and therefore Theorem \ref{theorem:main} is in force.
%\begin{align*}
%\sin(X_t) 1_{\{X_t\leq0\}}=\gamma\integral{0}{t}{\cos(X_t)1_{\{X_t<0\}}}ds
%+ \iint\limits_{[0,t]\times\R}\parenth{f(s,X_{s^-}+x)-f(s,X_{s^-})}\;J_X(ds\times dx).
%\end{align*}
\end{example}
\begin{example}
     Let the function $f$ and the process $X$ be the same as Theorem \ref{theorem:main}. In addition, we equip the probability space $(\Omega,\mathfrak F,\pr)$ with the natural filtration $\mathbb F^X=\{\mathcal F_t; t\geq0\}$ generated by the history of $X$, i.e. for each $t\geq0$, $\mathcal F_t$ is the sigma algebra generated by $\{X_s; s\leq t\}$ and all the null sets of $\mathfrak F$.  Since $X$ is a finite variation L\'evy process, similar to Step 4 of Theorem \ref{theorem:main}, one can show that for every $t\geq0$, $\iint\limits_{[0,t]\times\R}\left|f(s,X_{s^-}+x)-f(s,X_{s^-})\right|\;ds\times v(dx)<C\int_\R x\;v(dx)<\infty,$ $\pr$-almost surely, where $C$ is a random variable free from $s$ and $x$.  Then we have the following decomposition: $f(t,X_t)=f(0,X_0)+M_t+\int_0^t\mathcal A f(s,X_s)\;ds$, where $M$ is a local martingale with respect to $\mathbb F^X$ given by $M_t=\iint\limits_{[0,t]\times\R}\parenth{f(s,X_{s^-}+x)-f(s,X_{s^-})}\;\tilde J_X(ds\times dx)$, $\tilde J_X(ds\times dx)=J_X(ds\times dx)-ds\times v(dx)$, and 
     
     \begin{align*}
\mathcal A f(s,X_s) = \frac{\partial f}{\partial s}(s,X_s)+\gamma\frac{\partial f}{\partial x}(s,X_s)
%\\  &+ \integral{[0,t]\times\R}{}{\parenth{f(s,X_{s^-}+x)-f(s,X_{s^-})}}J_X(ds\times dx).
+ \int_\R\parenth{f(s,X_{s^-}+x)-f(s,X_{s^-})}\;v(dx).
%\\  &+ \iint{[0,t]\times\R}{\parenth{f(s,X_{s^-}+x)-f(s,X_{s^-})}}J_X(ds\times dx).
\end{align*}
In other words, this shows that the process $(f(t,X_t))_{t\geq0}$ is a special semimartingale.
\end{example}
In the next lemma, we get back to the motivation provided in the introduction. This lemma also highlights applications of Theorem \ref{theorem:main} in Feynman-Kac representations. Comparing to similar results, for instance \cite{rong}, this representation is valid in the absence of diffusions terms. In addition, there are less restrictive assumptions on the underlying function.
\begin{lemma}
Suppose that $X$ is a finite variation L\'evy process that satisfies Assumption \ref{assump:main} for $U=\R$. Let the function $P=P(t,x)$, defined by equation \eqref{eq:purposed}, admit $L_{loc}^1([0,T]\times(0,\infty))$-weak derivatives which are locally bounded. Then using Theorem \ref{theorem:main} and following the same procedure as Proposition 12.2 of \citet{cont} (or \cite{rong}), one can show that $P=P(t,x)$ is the solution of PIDE \eqref{eq:pide}.
\end{lemma}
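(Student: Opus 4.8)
The plan is to reproduce the Feynman--Kac argument of Proposition 12.2 of \cite{cont}, but with the classical It\^o formula replaced by Theorem \ref{theorem:main}; the smoothness of $P$ used there is taken over by the hypothesis that $P$ admits locally bounded weak derivatives. Since $X$ has finite variation we have $\sigma=0$, so the second-order term in \eqref{eq:pide} disappears, and the martingale requirement on $(e^{-rt}S_t)_{0\le t\le T}=(S_0e^{X_t})_{0\le t\le T}$ becomes $\gamma+\int_\R(e^z-1)\,v(dz)=0$. This identity is precisely what will turn the drift produced by Theorem \ref{theorem:main} into the compensated jump integrand of \eqref{eq:pide}.

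First I would introduce $f(t,y):=e^{-rt}P\bigl(t,S_0e^{rt+y}\bigr)$ on $[0,T]\times\R$, so that $f(t,X_t)=e^{-rt}P(t,S_t)$. Because $(t,y)\mapsto\bigl(t,S_0e^{rt+y}\bigr)$ is a smooth diffeomorphism, the chain rule for weak derivatives transfers the hypotheses of Theorem \ref{theorem:main} to $f$: continuity, local integrability, the support condition, and local boundedness of $\partial f/\partial t$ and $\partial f/\partial y$ all follow from the corresponding properties of $P$. Next, exactly as in \cite{cont} and unaffected by the lack of smoothness, I would use the Markov property of $X$ together with the tower property to show that the stopped process $N_t:=f(t\wedge\tau_D,X_{t\wedge\tau_D})=e^{-r(t\wedge\tau_D)}P(t\wedge\tau_D,S_{t\wedge\tau_D})$ equals $e^{-rT}\E[H(S_{T\wedge\tau_D})\mid\mathcal F_t]$ and is therefore a martingale; the boundary value $P(t,x)=0$ for $x\ge D$ makes $N$ vanish once the barrier is crossed, consistently with $H(S_{\tau_D})=0$.

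With $f$ in hand I would apply Theorem \ref{theorem:main} to $f(t,X_t)$ and, as in the special-semimartingale example above, read off the canonical decomposition $f(t,X_t)=f(0,X_0)+M_t+\int_0^t\mathcal{A}f(s,X_s)\,ds$, with $M$ a local martingale and $\mathcal{A}f(t,y)=\frac{\partial f}{\partial t}(t,y)+\gamma\frac{\partial f}{\partial y}(t,y)+\int_\R\bigl(f(t,y+z)-f(t,y)\bigr)v(dz)$. Computing these weak derivatives by the chain rule and writing $x=S_0e^{rt+y}$ yields
\begin{align*}
e^{rt}\,\mathcal{A}f(t,y) &= \frac{\partial P}{\partial t}(t,x)+rx\frac{\partial P}{\partial x}(t,x)-rP(t,x)+\Bigl(\gamma+\int_\R(e^z-1)v(dz)\Bigr)x\frac{\partial P}{\partial x}(t,x)\\
&\quad+\int_\R\Bigl(P(t,xe^z)-P(t,x)-x(e^z-1)\frac{\partial P}{\partial x}(t,x)\Bigr)v(dz),
\end{align*}
so that, by the martingale condition $\gamma+\int_\R(e^z-1)v(dz)=0$, the quantity $e^{rt}\mathcal{A}f(t,y)$ is exactly the left-hand side of \eqref{eq:pide} with $\sigma=0$.

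The last and main step is to extract the PIDE from the martingale property of $N$. A special semimartingale has a unique canonical decomposition, so, $N$ being a martingale, its finite-variation part $\int_0^{t\wedge\tau_D}\mathcal{A}f(s,X_s)\,ds$ must vanish identically; hence $\mathcal{A}f(s,X_s)=0$ for Lebesgue-almost every $s\le\tau_D$, $\pr$-almost surely. The delicate point---and the place where Assumption \ref{assump:main} and Proposition \ref{prop:main} are indispensable---is to promote this path-wise statement to an almost-everywhere statement in the state variable. Taking expectations and using that the law $\mu_s$ of $X_s$ is absolutely continuous with respect to Lebesgue measure gives $\int_0^T\!\int_\R|\mathcal{A}f(s,y)|\,1_{\{S_0e^{rs+y}<D\}}\,\mu_s(dy)\,ds=0$, and positivity of the density of $\mu_s$ on the states reachable by $S$ forces the left-hand side of \eqref{eq:pide} to vanish for Lebesgue-almost every $(t,x)\in[0,T]\times(0,D)$. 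Together with the terminal condition $P(T,x)=(x-K)^+$ and the boundary condition $P(t,x)=0$ for $x\ge D$, which are immediate from the definition \eqref{eq:purposed} of $P$ and of $H$, this shows that $P$ solves \eqref{eq:pide}. I expect this final upgrade to be the main obstacle: it is exactly where the absolute continuity of the one-dimensional marginals of $X$ must be invoked, and where one must accept that the PIDE is obtained only in the almost-everywhere sense on the states actually visited by $S$.
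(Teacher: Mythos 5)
Your proposal is correct and follows exactly the route the paper intends: the paper gives no proof beyond pointing to Theorem \ref{theorem:main} and the martingale/Feynman--Kac argument of Proposition 12.2 of \cite{cont}, and your write-up is precisely that argument with the classical It\^o formula replaced by Theorem \ref{theorem:main} (change of variables to $f(t,y)=e^{-rt}P(t,S_0e^{rt+y})$, martingality of the stopped discounted price, uniqueness of the canonical decomposition to kill the drift, then passage from the path-wise statement to the PIDE). The one point to watch is the final localisation, which you yourself flag: the vanishing drift gives $\mathcal{A}f(s,X_s)=0$ only on $\{s<\tau_D\}$, so the measure you integrate against is the law of the process killed at $\tau_D$ rather than $\mu_s$, and upgrading to ``almost every $(t,x)$ with $x<D$'' requires positivity of that killed density, which Assumption \ref{assump:main} alone does not supply --- a caveat equally present in the reference the paper cites.
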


%**********************************************************************************************************************************************
%**********************************************************************************************************************************************
%****************************************************************************************
\section{Conclusions}
A version of It\^o's formula is studied under multi-dimensional finite variation L\'evy processes that is time-dependent and requires weak differentiability. The formula can be particularly useful for functions that are continuous and piecewise smooth. The possible formula's applications were motivated by a financial example. 

The two main assumptions are that the process is finite variation and the weak derivatives of the functions are locally bounded. The extension of the formula to pure jump semimartingales, using the theory of distributions (in functional analysis),  is interesting for future work.

%**********************************************************************************************************************************************
%**********************************************************************************************************************************************
%**********************************************************************************************************************************************
\section{Acknowledgments}
The first author gratefully acknowledges partial financial support from the Vienna Science and Technology Fund (WWTF) under grant MA09-005 at Vienna University of Technology. Also, the authors are thankful to an anonymous referee for his/her constructive
comments.

%**********************************************************************************************************************************************
%**********************************************************************************************************************************************
%**********************************************************************************************************************************************

\bibliographystyle{plainnat}
%\bibliography{mybib}

\end{document}